\theoremstyle{plain}
\newtheorem{theorem}{\bf Theorem}
\newtheorem{proposition}[theorem]{\bf Proposition}
\newtheorem{corollary}[theorem]{\bf Corollary}
\newtheorem{example}{Example}[section]
\def\bds{\begin{displaystyle}}
\def\eds{\end{displaystyle}}
\def\quadvar#1{\left<#1,#1\right>}
\begin{document}

\title{Volatility in options formulae for general stochastic dynamics}

\author{Kais Hamza}
\address{School of Mathematical Sciences, Building 28M, Monash
University, Clayton Campus, Victoria 3800, Australia.} \email{kais.hamza@monash.edu}

\author{Fima Klebaner}
\address{School of Mathematical Sciences, Building 28M, Monash
University, Clayton Campus, Victoria 3800, Australia.} \email{fima.klebaner@monash.edu}

\author{Olivia Mah}
\address{School of Mathematical Sciences, Building 28M, Monash
University, Clayton Campus, Victoria 3800, Australia.} \email{olivia.mah@monash.edu}
\thanks{Research supported by the Australian Research Council grant DP0988483.}

\keywords{Black-Scholes formula, stochastic volatility, local volatility models}

\subjclass{60G44, 60H30, 91B28, 91B70}

\maketitle

\begin{abstract}
It is well-known that the Black-Scholes formula has been derived under the assumption of constant volatility in stocks.
In spite of evidence that this parameter is not constant, this formula is widely used by financial markets.
This paper addresses the question of whether an alternative model for stock price exists for which the Black-Scholes
or similar formulae hold. The results obtained in this paper are very general as no assumptions are made on the
dynamics of the model, whether it be the underlying price process, the volatility process or how they relate to each
other. We show that if the formula holds for a continuum of strikes and three terminal times then the volatility must be
constant. However, when it only holds for finitely many strikes, and three or more maturity times, we obtain a
universal bound on the variation of the volatility. This bound yields that the implied volatility is constant
when the sequence of strikes  increases to cover the entire half-line.
This recovers the result for a continuum of strikes by a different approach.
\end{abstract}

\section{Introduction}

The goal of the paper is to examine the compatibility between the Black-Scholes formula
and stock price models with non-constant volatility. Our investigation has a much more general
setup than that of the Black-Scholes model encompassing a class of diffusion models for stock prices. 

The Black-Scholes option pricing formula (published 40 years ago) is based on the
Black-Scholes stock price model,
		\[
			dZ_t = \mu Z_t dt + \sigma Z_t dW_t,
		\]

where   $W_t$ is a Brownian motion and $\sigma$, known as the spot volatility, is assumed constant.
Here by the price of an option at time $t$ with strike $K$ under the model $Z_t$, $0\le t\le T$,
we understand the expression
$$C(T,t,K,\sigma,z) = e^{-r(T-t)}\mathbb{E}[(Z_T - K)^+ | Z_t = z],$$
where $r$ is the riskless interest rate and $T$ is the option maturity, $t\le T$.

The Black-Scholes {\it implied volatility} is defined as such value
of the volatility parameter $\sigma$ that, when plugged into the Black-Scholes formula $C(T,t,K,\sigma,z)$,
gives the observed market price, i.e. that value of $\sigma$ that makes the observed and the theoretical option price (or model price) coincide.

Empirical studies  have
shown that the Black-Scholes implied volatility varies with both strike prices $K$ and maturities $T$. This is called the smile effect, and contradicts the constant volatility assumption.
This fact, however, has not diminished the popularity of
the Black-Scholes formula.  Option prices are often quoted in terms of the Black-Scholes implied volatilities, making the Black-Scholes formula,
a convenient communication tool in the industry.

Finding a stock price model
which is compatible with the Black-Scholes formula is of tremendous interest
to the finance industry and has become the impetus for the development of the modeling of the Black-Scholes implied volatility.

In this paper our approach differs from that of other
implied volatility modeling work in one significant way: we do not place any assumptions on the general
dynamics of the stock price process, spot volatility process or how they are related.

Our research builds upon the work of Hamza and Klebaner \cite{HK2006}, where it was shown that if option prices
of an arbitrary stock price model are given by the Black-Scholes formula for
a continuum of strike prices $K$ (and three maturities), then the implied volatility
must be constant. Under the additional assumption that the filtration of the price model is that of a Brownian motion,
then the model must be the Black-Scholes model (see \cite{HK2007}). In \cite{HK2008} this
conclusion was extended to the case when the implied volatility is also assumed to depend on the maturity date $T$.

Here we extend this non-existence result in two directions. Firstly, we consider a class of diffusion models
much more general than the Black-Scholes model considered in \cite{HK2006} and reach the same non-existence
conclusion found therein. Secondly, and for the sake of practicality, we adopt the more realistic assumption
that the option price formula (Black-Scholes or otherwise) only holds for finitely many strikes.
In this case we arrive at the result that the implied volatility is not necessarily constant,
but approaches a constant as the number of strike prices increases. More importantly,
we show that the variation of the implied volatility process is bounded by a value that
depends only on the strike prices and not on any of the model parameters. Our main result
takes the form of a uniform bound on the variation of the implied volatility process, or
rather on a large family of proxies thereof, and also provides us with a set of constraints
limiting the acceptable (i.e. compatible with the option pricing formula) values of the stock
price and implied volatility parameters; the more maturity times we have, the more refined this
set of constraints would be.

The paper is organized as follows.  Section 2 provides the basic setup of the
paper.  In the following sections we state our results, first under the continuum of strikes assumption,
then under the finitely many strikes assumption.

\section{General setup}

Throughout this paper, $Z^{(\sigma)}_t$ will denote a reference process (eg the Black-Scholes model),
or rather a family of processes indexed by a parameter $\sigma$ (we call the volatility parameter),
while $S_t$ will stand for an unspecified process whose option price formula mimics
those of $Z^{(\sigma)}_t$ (a precise meaning is given later).
The two processes may live on separate spaces and we let $(\Omega,\mathcal{F},\mathbb{F},P)$
be the filtered probability space that supports $S_t$. As $Z^{(\sigma)}_t$ will be assumed Markovian,
any conditioning on the past reduces to a conditioning on the present, and therefore no reference
will be made to the filtration (or for that matter the probability space) of $Z^{(\sigma)}_t$.

For simplicity, we assume that the riskless interest rate $r=0$.
Therefore, according to the First Fundamental Theorem of Asset Pricing, a
stock price model does not have arbitrage opportunities if and
only if there exists an equivalent probability measure, known as a no-arbitrage measure,
under which the stock price is a martingale. The price of an option is then simply
the expectation, under the no-arbitrage measure, of the payoff function conditional on the past
\begin{equation}\label{E:gen_pricing}
    C(T,t,K,\sigma,z) = \mathbb{E}\Big[\Big(Z^{(\sigma)}_T - K\Big)^+ \Big| Z^{(\sigma)}_t = z\Big].
\end{equation}

Throughout this paper we will invariably write $\mathbb{E}$ for
the expectations in the ``$Z$-space'' or the ``$S$-space'';
the precise meaning of the expectation being
obvious from the context. For ease of exposition, we shall also often write $Z_t$ for
$Z^{(\sigma)}_t$, occasionally reverting to the original notation to highlight the dependence on $\sigma$.

As eluded earlier, the process $Z^{(\sigma)}_t$ will be assumed to be a diffusion whose diffusion
coefficient contains the volatility parameter $\sigma$ as a multiplicative factor:
\begin{equation}
dZ_t = \sigma h(t)\beta(Z_t)dB_t\label{sde}
\end{equation}
where $B$ is a Brownian motion, $h$ and $\beta$ are deterministic functions
such that $h\neq0$ and for any $t$, $\bds\int_0^th(s)^2ds<+\infty\eds$.

We   later comment on the reason for choosing a volatility function of the form $h(t)\beta(z)$ rather than
a more general $b(t,z)$.

Throughout this paper we make the following two assumptions.

{\bf Diffusion Assumption (D)}: Stochastic differential equation \eqref{sde}
has a unique weak solution that takes values in in an open interval in $[0,+\infty]$.

{\bf Martingale Assumption (M)}: $Z_t$ is a true martingale and there exists a deterministic function $\phi$
on $(0,+\infty)$ that is
\begin{enumerate}
\item[(M0)] positive,
\item[(M1)] of class $C^2$ and
\item[(M2)] such that, for any $\sigma>0$,
$\bds U_t=\exp\left(-\sigma^2\int_0^th(s)^2ds\right)\phi(Z_t)\eds$ is a true martingale
or equivalently that, for any $\sigma>0$, $\bds V_t=\phi(Z_t)-\sigma^2\int_0^th(s)^2\phi(Z_s)ds\eds$
is a true martingale.
\end{enumerate}

Observe that the smoothness assumption (M1) is almost redundant. Indeed, under the extra assumption that
$\quadvar Z_\infty=+\infty$, the martingale requirement in  (M2)
automatically implies that $\phi$ is the difference of 2 convex functions (a slightly weaker version of (M1)).
Indeed, since $\phi(Z_t)$ is a semimartingale, so is $\phi(Z_{\tau_t})$, where $\tau_t=\inf\{u:\quadvar Z_u>t\}$.
Now, $Z_{\tau_t}$ is a Brownian motion, and making use of Wang \cite{Wang1977}, we conclude that $\phi$ must be
the difference of 2 convex functions.

We also observe that the function $h$ plays no role in any of the martingale requirements in (M). Indeed, applying
the deterministic change of time $T_t=\inf\{s:\int_0^sh(u)^2du>t\}$ to the solution $Z$ yields a solution
of \eqref{sde} with $h\equiv1$.

\begin{example}
Of particular importance is, of course, the Black-Scholes model $dZ_t = \sigma Z_tdB_t$,
and its celebrated formula
$$C_{BS}(T,t,K,\sigma,z) = \mathbb{E}[(Z_T - K)^+ | Z_t = z]
= z\Phi(\eta)-K\Phi(\eta-\sigma\sqrt{T-t}),$$
where $\eta = \frac{\log\frac{z}{K}+\frac{\sigma^2}2(T-t)}{\sigma\sqrt{T-t}}$ and $\Phi$
is the distribution function of the standard normal distribution.
In this case
$h\equiv1$, $\beta(z)=z$ and, as demonstrated in \cite{HK2006}, one can choose $\phi(z)=z^2$.
\end{example}

While the choice of $\phi$ in the previous example may seem ``natural'' (see \cite{HK2006}),
this is not the case in general
and a generic way of finding a suitable $\phi$ would be desirable.
By application of Ito's formula, we see that a necessary condition we must impose on
$\phi$ is that it satisfies the ordinary differential equation
\begin{equation}\label{ode}
\frac12\beta(z)^2\phi''(z) = \phi(z).
\end{equation}
This condition on $\phi$ is necessary and sufficient to ensure that $U_t$ and $V_t$ are local martingales.
It is however not sufficient to guarantee that they are true martingales.

In the next example the martingale property of $U_t$ follows from its boundedness. Novikov and Kazamaki
conditions may be used in other cases -- see also Klebaner and Liptser \cite{KL2013}.

Note that, as a consequence of \eqref{ode}, any positive $\phi$ must also be (strictly) convex.

\begin{example}
If $Z_t$ is a scaled square of a 0-dimensional Bessel process, $dZ_t = \sigma\sqrt{Z_t}dB_t$, $Z_0=z_0>0$, then both
$Z_t$ and $U_t$ are true martingales. Recall that strong uniqueness holds in this case and that, by application of
Gronwall's and BDG inequalities, one can show that $Z_t$ is a square integrable martingale.

In this case, \eqref{ode} becomes $z\phi''(z) = 2\phi(z)$ and one can choose
$$\phi(z) = 2\sqrt{2z}K_1(2\sqrt{2z}),$$
where $K_1$ is the modified Bessel function of the second kind of order 1. Note that $\phi$ decreases from 1 to 0 and is
not differentiable at 0. It follows that $U$ is bounded but that its martingale property does not extend beyond
$\tau = \inf\{t\geq0:Z_t=0\}$.
\end{example}

\begin{example}
Consider the stochastic differential equation $dZ_t = \sigma Z_t\sqrt{-2\ln Z_t}dB_t$, $Z_0=z_0\in(0,1)$. Because
$\beta^{-2}$ is integrable on any compact subset of $(0,1)$, this SDE has a unique weak solution -- see
\cite[Theorem 5.15]{KS1991}.

Let $\tau = \inf\{t\geq0:Z_t=0\mbox{ or }Z_t=1\}$. Then the local martingale $Z_{t\wedge\tau}$ is bounded and is
therefore a true martingale.

Finally, $\phi(z)=-\ln z$ clearly solves \eqref{ode} and we have
\begin{eqnarray}
dU_t & = & \exp\left(-\sigma^2\int_0^th(s)^2ds\right)\phi'(Z_t)dZ_t\label{dU}\\
& = & -\sqrt2\sigma\exp\left(-\frac{\sigma^2}2\int_0^th(s)^2ds\right)\sqrt{U_t}dB_t.\nonumber
\end{eqnarray}
Thus we see that $U_t$ is, up to a deterministic change of time, nothing else but a squared
0-dimensional Bessel process. The true martingale property of $U_t$ immediately follows as  in the previous
example.
\end{example}

While in the examples above specific considerations enabled us to show the true martingale property of $Z_t$
and $U_t$ (and consequently that of $V_t$), it is natural to look for generic sufficient conditions to achieve
this requirement. In view of \eqref{dU}, the following result provides conditions for $U_t$ to be a true martingale
when $Z_t$ is known to be a true martingale.

\begin{theorem}
Let $\mathcal{Z}$ be a continuous martingale and $g$ be a convex function that satisfies the linear growth:
\begin{equation}
\exists\,\alpha>0\,/\ \forall z,\ |g(z)|\leq \alpha(1+|z|).
\end{equation}
Then the local martingale $\bds\int_0^tg'_-(\mathcal{Z}_s)d\mathcal{Z}_s\eds$ is a true martingale.
\end{theorem}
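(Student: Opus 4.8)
The plan is to identify $M_t:=\int_0^tg'_-(\mathcal Z_s)\,d\mathcal Z_s$ through a Tanaka-type decomposition of $g(\mathcal Z)$, and then to upgrade the almost sure convergence along a localising sequence to $L^1$ convergence, exploiting the convexity and linear growth of $g$ together with the submartingale property of $|\mathcal Z|$.

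First I would observe that convexity together with $|g(z)|\le\alpha(1+|z|)$ forces $g'_-$ to be bounded, in fact $|g'_-|\le\alpha$: $g'_-$ is non-decreasing and the difference quotients $(g(w)-g(z))/(w-z)$ tend to values in $[-\alpha,\alpha]$ as $w\to\pm\infty$, so $g'_-(z)$ cannot leave this interval. Hence $\int_0^tg'_-(\mathcal Z_s)^2\,d\langle\mathcal Z\rangle_s\le\alpha^2\langle\mathcal Z\rangle_t<\infty$ almost surely, and $M$ is a well-defined continuous local martingale. The Tanaka formula for a convex function of a continuous semimartingale then gives
\begin{equation*}
g(\mathcal Z_t)=g(\mathcal Z_0)+\int_0^tg'_-(\mathcal Z_s)\,d\mathcal Z_s+A_t,\qquad A_t=\frac12\int_{\mathbb R}L_t^a\,g''(da),
\end{equation*}
where $g''$ is the non-negative second-derivative measure of $g$ and $L^a$ the local time of $\mathcal Z$; thus $A$ is continuous, non-decreasing and adapted with $A_0=0$, and $M_t=g(\mathcal Z_t)-g(\mathcal Z_0)-A_t$.

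Now take a localising sequence $\tau_n\uparrow\infty$ with each $M^{\tau_n}$ a true martingale, so that $\mathbb E[M_{t\wedge\tau_n}]=0$. Taking expectations in the stopped identity and using the two-sided growth bound,
\begin{equation*}
\mathbb E[A_{t\wedge\tau_n}]=\mathbb E[g(\mathcal Z_{t\wedge\tau_n})]-\mathbb E[g(\mathcal Z_0)]\le\alpha\big(2+\mathbb E|\mathcal Z_{t\wedge\tau_n}|+\mathbb E|\mathcal Z_0|\big).
\end{equation*}
Since $|\mathcal Z|$ is a non-negative submartingale and $t\wedge\tau_n\le t$, optional sampling gives $\mathbb E|\mathcal Z_{t\wedge\tau_n}|\le\mathbb E|\mathcal Z_t|$, so the right-hand side is a finite constant $C_t$ independent of $n$; by monotone convergence $\mathbb E[A_t]\le C_t<\infty$ and $A_{t\wedge\tau_n}\to A_t$ in $L^1$. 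Moreover $|\mathcal Z_{t\wedge\tau_n}|\le\mathbb E[|\mathcal Z_t|\mid\mathcal F_{t\wedge\tau_n}]$, so $\{|\mathcal Z_{t\wedge\tau_n}|\}_n$, and hence $\{g(\mathcal Z_{t\wedge\tau_n})\}_n$ (dominated by $\alpha(1+|\mathcal Z_{t\wedge\tau_n}|)$), is uniformly integrable; together with the a.s.\ convergence $g(\mathcal Z_{t\wedge\tau_n})\to g(\mathcal Z_t)$ coming from continuity, this yields convergence in $L^1$. Adding the two limits, $M_{t\wedge\tau_n}\to M_t$ in $L^1$ for every $t$; in particular $M_t$ is integrable, and for $s\le t$ and $F\in\mathcal F_s$ we may pass to the limit in $\mathbb E[M_{t\wedge\tau_n}\mathbf 1_F]=\mathbb E[M_{s\wedge\tau_n}\mathbf 1_F]$ to get $\mathbb E[M_t\mathbf 1_F]=\mathbb E[M_s\mathbf 1_F]$, so $M$ is a true martingale.

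The one genuine obstacle is the well-known fact that a continuous local martingale which is integrable at each fixed time need not be a true martingale; the role of the hypotheses is exactly to supply the missing control — the uniform $L^1$ bound on the increasing part $A$ and the (submartingale) uniform integrability of $\{|\mathcal Z_{t\wedge\tau_n}|\}_n$ — which forces $L^1$ rather than merely a.s.\ convergence of $M_{t\wedge\tau_n}$. Alternatively one can note that $g(\mathcal Z)$ is a continuous submartingale of class (DL) (using linear growth to control its negative part) and identify $M$ with the martingale in its Doob--Meyer decomposition, but the argument above is self-contained.
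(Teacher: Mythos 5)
Your argument is correct: the bound $|g'_-|\leq\alpha$, the It\^o--Tanaka decomposition $g(\mathcal{Z}_t)=g(\mathcal{Z}_0)+M_t+A_t$ with $A$ increasing, the estimate $\mathbb{E}[A_{t\wedge\tau_n}]\leq\alpha\big(2+\mathbb{E}|\mathcal{Z}_t|+\mathbb{E}|\mathcal{Z}_0|\big)$ via optional sampling of the submartingale $|\mathcal{Z}|$, and the uniform integrability of $\{|\mathcal{Z}_{t\wedge\tau_n}|\}_n$ coming from $|\mathcal{Z}_{t\wedge\tau_n}|\leq\mathbb{E}[|\mathcal{Z}_t|\mid\mathcal{F}_{t\wedge\tau_n}]$ all check out, and together they do give $L^1$ convergence of $M_{t\wedge\tau_n}$ to $M_t$ and hence the true martingale property. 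The route is, however, not the one the paper takes: the paper gives no self-contained proof and instead identifies $\int_0^t g'_-(\mathcal{Z}_s)d\mathcal{Z}_s$ with the martingale part of a Doob--Meyer-type decomposition, observing that $g(\mathcal{Z})$ is a submartingale of class (DL) (exactly the uniform integrability estimate you use) and invoking the extension of that decomposition developed in \cite{HK2014} -- the alternative you sketch in your closing sentence. Your localization-plus-uniform-integrability argument buys a short, elementary and self-contained proof of the stated theorem; the paper's Doob--Meyer/class (DL) machinery buys generality, since \cite{HK2014} is designed to handle the It\^o--Tanaka formula beyond the convex, linear-growth case (e.g.\ differences of convex functions), where the compensator is no longer increasing and your monotone-convergence step for $A$ would need to be replaced by a genuine decomposition theorem. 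One cosmetic remark: the preliminary observation that $|g'_-|\leq\alpha$ is not actually needed for your main argument (local boundedness of $g'_-$ already makes $M$ a well-defined continuous local martingale), so you could drop it or keep it only as motivation.
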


The crucial step in establishing this result is an extension of the Doob-Meyer decomposition for submartingales
to processes of class (DL) -- see \cite{HK2014} for details.

Delving further into the substance of Assumption (M), we see that in the time-homogeneous case, $h\equiv1$,
and denoting by $A$  the (extended) infinitesimal generator
of the process $Z^{(1)}$, the ``standard'' $Z$-process, \eqref{ode} can be simply
written as $A\phi=\phi$. In other words, (M) reduces to the existence of a deterministic function $\phi$
such that, for any $\sigma>0$,
$$P^{(\sigma)}_t\phi(z) = e^{\sigma^2t}\phi(z),$$
where $P^{(\sigma)}_t$ is the transition semigroup of $Z^{(\sigma)}$.

Finally, we comment on the choice of the product form adopted under Assumption (D)
rather than a more general class of models,
$dZ_t = \sigma b(t,Z_t)dB_t$. In this case the function $\phi$
in Assumption (M) ought to be allowed to depend on both time $t$
and $Z_t$. However, under sufficient smoothness and by Ito's formula we get
that $\phi$ must satisfy the identities
$$\frac{\partial}{\partial t}\phi(t,z) = 0\mbox{ and }
\frac12b(t,z)^2\frac{\partial^2}{\partial z^2}\phi(t,z) = h(t)^2\phi(t,z).$$
It now follows  that $\phi$ does not depend on $t$ and
$b(t,z)$ is of the product form given in (D).

In this paper, the family of processes $Z$ will be assumed to satisfy Assumptions (D) and (M) and
we aim to look for a pair of processes $(\theta_t,S_t)$ such that
\begin{equation}\label{maineq0}
\mathbb{E}[(S_T-K)^+|\mathcal{F}_t] = C(T,t,K,\theta_t, S_t),
\end{equation}
where the function $C$ is given by \eqref{E:gen_pricing}.
Initially we impose this condition for all $K$'s and three maturity times $T$. Later, we tackle
the case of finitely many strikes with a varying number of maturity times.

In what follows a  conditional expectation of the type $\mathbb{E}[g(Z_T)|Z_t=z]$ will be considered
a function of the triple $(t,\sigma,z)$. When such a function is applied to $(t,\theta_t,S_t)$,
we simply write $\mathbb{E}[g(Z_T)|Z_t=z](t,\theta_t,S_t)$.

\section{The case of a continuum of strikes}

Here we consider the case of a continuum of strikes and show that non-constant volatility parameter models are not
compatible with the option price formula \eqref{maineq0}. More specifically, when the option price formula holds for a continuum of strikes
(i.e. for any $K\geq0$) and three maturity times, then the process $\theta_t$ must  be constant (i.e.
non-random and not dependent on $t$). The following theorem is a generalization to diffusion models of the main result in
\cite{HK2006}.

\begin{theorem}\label{continuum}
Let $S_t$ and $\theta_t$ be adapted processes such that
$\theta_0 = \sigma$ and $S_0=z_0$. Assume that $S_t$ is
non-negative and that there exist
three terminal times, $T_1<T_2<T_3$ such that, for all
$K\geq0$ and all $t\leq T_i$, $i=1,2,3$
\begin{equation}\label{maineq}
\mathbb{E}[(S_{T_i}-K)^+|{\mathcal{F}}_t] =
C(T_i,t,K,\theta_t,S_t)
\end{equation}
Then $\theta^2_t = \sigma^2$ for all $t\le T_1$.

Furthermore, if $\mathbb{F}$ is the natural filtration of some Brownian motion,
then $(S_t)_{t\leq T_1}\stackrel{d}{=}(Z_t)_{t\leq T_1}$.
In other words $Z$ is the only model on a Brownian filtration that is compatible with \eqref{maineq}.
\end{theorem}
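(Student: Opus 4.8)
The plan is to exploit Assumption (M) to turn the pricing identity \eqref{maineq} into a martingale relation that pins down $\theta_t$. The key observation is that the function $\phi$ gives us, beyond the call payoffs $(z-K)^+$, an additional ``test function'' whose $Z$-expectation we control explicitly: by (M2), $\mathbb{E}[\phi(Z_T)\,|\,Z_t=z] = \exp\!\left(\sigma^2\int_t^T h(s)^2\,ds\right)\phi(z)$. Since $\phi$ is convex (a consequence of \eqref{ode} and positivity), it is a limit of positive combinations of functions of the form $(z-K)^+$ plus an affine part; applying \eqref{maineq} for a continuum of strikes and using that $S_t$ is a martingale (the $K=0$ case, together with non-negativity), I would first establish the transferred identity
\begin{equation}\label{pp:phi}
\mathbb{E}[\phi(S_{T_i})\,|\,\mathcal{F}_t] = \exp\!\left(\theta_t^2\int_t^{T_i}h(s)^2\,ds\right)\phi(S_t),\qquad i=1,2,3.
\end{equation}
Here one must be slightly careful integrating the identity against $dK$ and justifying the passage to the limit via monotone/dominated convergence and the linear-growth control on $\phi$ that the theorem preceding the statement was designed to provide.

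Next I would use the tower property to remove the conditioning and compare the three maturities. Write $H_i(t)=\int_t^{T_i}h(s)^2\,ds$. Fix $t\le T_1$. From \eqref{pp:phi} with $T=T_i$ and the tower property applied at an intermediate time, taking $\mathcal{F}_t$-conditional expectations yields
\[
\mathbb{E}\!\left[\exp\!\left(\theta_u^2 H_i(u)\right)\phi(S_u)\,\Big|\,\mathcal{F}_t\right]
=\exp\!\left(\theta_t^2 H_i(t)\right)\phi(S_t)
\]
for any $t\le u\le T_1$, so the process $M^{(i)}_u=\exp(\theta_u^2 H_i(u))\phi(S_u)$ is an $\mathcal{F}$-martingale on $[0,T_1]$ for each $i$. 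Now form ratios: $M^{(i)}_u/M^{(j)}_u=\exp\!\left(\theta_u^2\bigl(H_i(u)-H_j(u)\bigr)\right)$, and $H_i(u)-H_j(u)=\int_{T_j}^{T_i}h(s)^2\,ds$ is a \emph{deterministic constant} $c_{ij}\ne 0$ (since $h\ne 0$). So $\exp(c_{ij}\theta_u^2)$ is itself expressible through two martingales. The three-maturity hypothesis gives two independent such relations; combining them (e.g. taking suitable powers so the $\phi(S_u)$ factors cancel) forces $\exp(\theta_u^2\cdot(\text{const}))$ to be a martingale that is also, after cancellation, a deterministic multiple of a power of a martingale — and a positive martingale which is a deterministic function of another martingale in this rigid exponential way must be constant. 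Tracking constants through $\theta_0=\sigma$, $S_0=z_0$ then gives $\theta_u^2=\sigma^2$ for all $u\le T_1$.

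For the second assertion, once $\theta_t\equiv\sigma$ the identity \eqref{maineq} says $\mathbb{E}[(S_{T_1}-K)^+\,|\,\mathcal{F}_t]=C(T_1,t,K,\sigma,S_t)$, i.e. $S$ has, conditionally on $\mathcal{F}_t$, exactly the one-dimensional marginals of the diffusion $Z^{(\sigma)}$ started at $S_t$. On a Brownian filtration I would run a predictable-representation / martingale-problem argument: the martingale $t\mapsto\mathbb{E}[(S_{T_1}-K)^+\,|\,\mathcal{F}_t]=C(T_1,t,K,\sigma,S_t)$ admits an Itô expansion, and matching the finite-variation part (using that $C$ solves the backward PDE associated with \eqref{sde}) forces $d\langle S\rangle_t=\sigma^2 h(t)^2\beta(S_t)^2\,dt$; since $S$ is a continuous martingale on a Brownian filtration with this quadratic variation, it is a weak solution of \eqref{sde}, and uniqueness in law from Assumption (D) gives $(S_t)_{t\le T_1}\stackrel{d}{=}(Z_t)_{t\le T_1}$.

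I expect the main obstacle to be the rigidity step in paragraph two: arguing cleanly that the only way $\exp(c\,\theta_u^2)$ can be written as a ratio (or product of powers) of martingales built from $\phi(S_u)$ is the trivial one, $\theta_u^2\equiv\mathrm{const}$. This needs either a direct variance/Jensen argument on $M^{(i)}_u$ (a nonnegative martingale equal to a deterministic function of time times a \emph{fixed} random variable $\phi(S_u)$ forces that function to be constant and $\phi(S_u)$ itself constant in the relevant coupling) or the strict convexity of $x\mapsto e^{cx}$ to kill the randomness — the bookkeeping with three times to eliminate the $\phi(S_u)$ factor is where the care is required.
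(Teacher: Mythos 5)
Your first step is exactly the paper's: use the representation of the convex function $\phi$ as a mixture of call payoffs (plus an affine part) together with \eqref{maineq} and Assumption (M2) to obtain $\mathbb{E}[\phi(S_{T_i})\,|\,\mathcal{F}_t]=\exp\bigl(\theta_t^2\int_t^{T_i}h(s)^2ds\bigr)\phi(S_t)$, so that the three processes $M_t$, $M_tX_t$, $M_tX_t^{\alpha}$ (with $M_t=\exp(\theta_t^2\int_t^{T_1}h^2)\phi(S_t)$, $X_t=\exp(\theta_t^2\int_{T_1}^{T_2}h^2)$, $\alpha=\int_{T_1}^{T_3}h^2/\int_{T_1}^{T_2}h^2>1$) are martingales on $[0,T_1]$. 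But the step you yourself flag as ``the main obstacle'' is precisely the heart of the proof, and your sketch of it does not work as stated. The paper isolates it as Proposition \ref{specmartg}: if $M$ is positive and $M$, $MX$, $MX^{\alpha}$ are (local) martingales for some $\alpha>1$, then $X\equiv X_0$. Your route via the ratio $M^{(i)}_u/M^{(j)}_u=\exp(c_{ij}\theta_u^2)$ goes nowhere directly, because a ratio of martingales is not a martingale and carries no usable structure; likewise the parenthetical claim that $\phi(S_u)$ is a ``fixed random variable'' is false ($S_u$ varies with $u$), so the variance argument you gesture at does not apply. The clean argument — the one behind Proposition \ref{specmartg} in \cite{HK2006} — is a change of measure plus equality in Jensen: use the positive martingale $M$ as a density, $dQ=(M_t/M_0)\,dP$ on $\mathcal{F}_t$; then the martingale property of $MX$ and $MX^{\alpha}$ gives $\mathbb{E}_Q[X_t]=X_0$ and $\mathbb{E}_Q[X_t^{\alpha}]=X_0^{\alpha}$, i.e. equality in Jensen's inequality for the strictly convex map $x\mapsto x^{\alpha}$, forcing $X_t=X_0$ $Q$-a.s., hence $P$-a.s. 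Note also why three maturities are essential: with only $M$ and $MX$ no such conclusion is possible, so any argument that does not genuinely use the third time (through the convexity of $x^{\alpha}$, $\alpha>1$) cannot close the gap. Without this lemma, your proof of $\theta_t^2=\sigma^2$ is incomplete.

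The second assertion is handled in essentially the paper's way, with a small difference: you propose to apply It\^o's formula to $C(T_1,t,K,\sigma,S_t)$ and invoke the backward PDE for $C$, which requires smoothness of $C$ that you would have to justify; the paper instead applies It\^o directly to $M_t=\exp(\sigma^2\int_t^{T_1}h^2)\phi(S_t)$, where $\phi$ is $C^2$ by (M1) and satisfies \eqref{ode}, and identifies the vanishing finite-variation part to get $d\langle S,S\rangle_t=\sigma^2h(t)^2\beta(S_t)^2dt$; then weak uniqueness from Assumption (D) gives $(S_t)_{t\leq T_1}\stackrel{d}{=}(Z_t)_{t\leq T_1}$. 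Your conclusion via the martingale-problem/uniqueness-in-law step is the same; only the intermediate smoothness burden differs.
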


\begin{proof}
The proof follows \cite{HK2006} and is adapted to the general diffusion setup of Assumption (D)
through the use of the function $\phi$.

First we show that since $Z_t$ is a martingale so is $S_t$ (at least up to $T_3$). Indeed,
\begin{eqnarray*}
\mathbb{E}[S_{T_i}|{\mathcal{F}}_t] & = & \mathbb{E}[(S_{T_i}-0)^+|{\mathcal{F}}_t]\ =\ \mathbb{E}[(Z_{T_i}-0)^+|Z_t=z](t,\theta_t,S_t)\\
& = & \mathbb{E}[Z_{T_i}|Z_t=z](t,\theta_t,S_t)\ =\ S_t.
\end{eqnarray*}

Because
$$\phi(x) = \int_0^\infty(x - a)^{+} \phi''(a)da + \phi'(0)x + \phi(0),$$
we deduce that (recall that $\phi$ is convex and $\phi''\geq0$)
\begin{eqnarray*}
\lefteqn{\mathbb{E}[\phi(S_{T_i})|{\mathcal{F}}_t]}\\
& = & \int_0^\infty\mathbb{E}[(S_{T_i}-K)^+|{\mathcal{F}}_t]\phi''(K)dK + \phi'(0)\mathbb{E}[S_{T_i}|{\mathcal{F}}_t] + \phi(0)\\
& = & \int_0^\infty\mathbb{E}[(Z_{T_i}-K)^+|Z_t=z](t,\theta_t,S_t)\phi''(K)dK + \phi'(0)S_t + \phi(0)\\
& = & \mathbb{E}[\phi(Z_{T_i})|Z_t=z](t,\theta_t,S_t)\\
& = & \exp\left(\theta_t^2\int_t^{T_i}h(s)^2ds\right)\phi(S_t)
\end{eqnarray*}
where the last equality follows from Assumption (M2).
Thus for each $i$,\\
$\bds\exp\left(\theta_t^2\int_t^{T_i}h(s)^2ds\right)\phi(S_t)\eds$ is a
true martingale (up to $T_i$). In other words, with
$$M_t = \exp\left(\theta_t^2\int_t^{T_1}h(s)^2ds\right)\phi(S_t)\mbox{ and }X_t = \exp\left(\theta_t^2\int_{T_1}^{T_2}h(s)^2ds\right),$$
the assumptions of the theorem imply the existence of three martingales of the form $M_t$, $M_tX_t$ and $M_tX_t^\alpha$, where $X_t$
is a semimartingale and
$$\alpha = \frac{\int_{T_1}^{T_3}h(s)^2ds}{\int_{T_1}^{T_2}h(s)^2ds} > 1.$$
By Proposition \ref{specmartg} below it follows that $X_t=X_0$, which in turn proves that $\theta_t^2=\theta_0^2=\sigma^2$.

Now if $\mathbb{F}$ is a Brownian filtration, then all martingales are continuous and, in view of \eqref{ode} and the fact that $\theta_t^2=\sigma^2$,
\begin{eqnarray*}
\lefteqn{dM_t}\\
& = & \exp\left(\sigma^2\int_t^{T_1}h(s)^2ds\right)\left(-\sigma^2h(t)^2\phi(S_t)dt+\phi'(S_t)dS_t+\frac12\phi''(S_t)d\left<S,S\right>_t\right)\\
& = & \exp\left(\sigma^2\int_t^{T_1}h(s)^2ds\right)\left(-\frac{\sigma^2}2h(t)^2\beta(S_t)^2\phi''(S_t)dt+\frac12\phi''(S_t)d\left<S,S\right>_t\right)+d\bar{M}_t,
\end{eqnarray*}
where $\bar{M}_t$ is a (continuous) local martingale. As all local martingales of finite variation must be constant, it follows that
$$d\left<S,S\right>_t = \sigma^2h(t)^2\beta(S_t)^2dt.$$
\end{proof}

\begin{proposition}\label{specmartg}
Let $M_t$ and $X_t$ be two  semimartingales, $M_t$ is
positive. Assume that there exists $\alpha>1$ such that $M_t$, $M_tX_t$ and $M_tX_t^\alpha$ are local
martingales.
Then $X_t\equiv X_0$.
\end{proposition}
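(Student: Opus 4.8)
The plan is to exploit the three local-martingale conditions to force the martingale part of $X_t$ to vanish and its finite-variation part to be constant. First I would apply It\^o's formula to the products $M_tX_t$ and $M_tX_t^\alpha$, writing $d(M_tX_t)$ and $d(M_tX_t^\alpha)$ in terms of $dM_t$, $dX_t$, $d\langle M,X\rangle_t$, $d\langle X,X\rangle_t$, the drift of $X$, and the exponent $\alpha$. Since $M_t$ itself is a local martingale, the drift contributed by $dM_t$ drops out of each expression; setting the remaining finite-variation parts to zero yields two identities. Schematically, writing $dX_t = dN_t + dA_t$ for the semimartingale decomposition of $X$ (local-martingale part $N$, finite-variation part $A$), the condition from $M_tX_t$ gives
\begin{equation*}
M_t\,dA_t + d\langle M,X\rangle_t = 0,
\end{equation*}
while the condition from $M_tX_t^\alpha$ gives, after dividing by $X_t^{\alpha-1}$ where positive,
\begin{equation*}
\alpha M_t\,dA_t + \alpha\,d\langle M,X\rangle_t + \tfrac{\alpha(\alpha-1)}{2}\,\frac{M_t}{X_t}\,d\langle X,X\rangle_t = 0.
\end{equation*}

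Subtracting $\alpha$ times the first identity from the second eliminates both $M_t\,dA_t$ and $d\langle M,X\rangle_t$, leaving
\begin{equation*}
\frac{\alpha(\alpha-1)}{2}\,\frac{M_t}{X_t}\,d\langle X,X\rangle_t = 0.
\end{equation*}
Because $\alpha>1$, $M_t>0$, and $X_t>0$ (positivity of $X$ should follow since $MX$ and $M$ are both positive, $M$ being positive by hypothesis; I would note that $X_0>0$ is forced and that $X$ cannot hit zero without $MX^\alpha$ losing its local martingale property, or simply invoke that $X>0$ as part of the standing setup), we conclude $d\langle X,X\rangle_t = 0$, hence $\langle X,X\rangle_t\equiv 0$ and therefore the continuous local-martingale part $N$ of $X$ is identically zero. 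Thus $X_t = X_0 + A_t$ is of finite variation. Feeding this back into the first identity, $d\langle M,X\rangle_t=0$ as well (the bracket of anything with a finite-variation process vanishes), so $M_t\,dA_t = 0$, and since $M_t>0$ this gives $dA_t=0$, i.e. $A_t\equiv 0$ and $X_t\equiv X_0$.

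The main obstacle is justifying the manipulations at the level of rigor the paper expects: dividing by $X_t$ and $X_t^{\alpha-1}$ requires knowing $X$ stays strictly positive, and the It\^o expansion of $X_t^\alpha$ for non-integer $\alpha$ requires $X$ to avoid $0$; so the first real step is to establish strict positivity of $X$ (or to localize away from $0$ and pass to the limit). A secondary technical point is that $X$ is only assumed to be a semimartingale, so it genuinely has a continuous local-martingale part plus a possibly-jump finite-variation part; I would either restrict to the continuous case (consistent with how the Proposition is invoked in the proof of Theorem~\ref{continuum}, where the filtration is eventually Brownian but the bracket argument is used generally) or handle jumps by noting that the jump terms in the three local-martingale conditions must also cancel, forcing $\Delta X\equiv 0$. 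Once positivity and the continuity/jump bookkeeping are settled, the algebraic elimination above is routine.
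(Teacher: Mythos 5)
Your route is genuinely different from the one the paper relies on: the paper gives no proof here but defers to \cite{HK2006}, where the argument is a convexity one --- localize so that the positive local martingale $M$ can be used as a change-of-measure density, observe that under the new measure both $X$ and $X^\alpha$ are local martingales, and let strict conditional Jensen for the strictly convex map $x\mapsto x^\alpha$, $\alpha>1$, force $X$ to be constant. Your It\^o-elimination argument (cancel $M\,dA$ and $d\langle M,X\rangle$ between the drift identities of $MX$ and $MX^\alpha$ to get $d\langle X,X\rangle=0$, then feed back to get $dA=0$) is correct and complete in the setting of \emph{continuous} semimartingales with $X$ strictly positive: there the decomposition $X=N+A$ is unique, a continuous finite-variation local martingale is constant (which is what legitimizes ``setting the drift to zero''), and stopping at $\inf\{t:X_t<1/n\}$ justifies It\^o for $x^\alpha$ when $1<\alpha<2$.

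The gaps are real, however. First, jumps: the proposition is invoked in the proof of Theorem \ref{continuum} to conclude $\theta_t^2=\sigma^2$ \emph{before} the Brownian-filtration assumption is introduced, so restricting to continuous $M$ and $X$ is not ``consistent with how the Proposition is invoked'' --- it weakens exactly the case the paper needs. For a general semimartingale the decomposition $X=N+A$ is not unique, the product rule produces $[M,X]$ (not the predictable bracket $\langle M,X\rangle$), and identifying and annihilating ``the finite-variation part'' requires special-semimartingale structure and compensators that you have not established; your remedy that the jump terms ``must also cancel, forcing $\Delta X\equiv0$'' is an assertion, not an argument, since the contributions of $\Delta M$, $\Delta X$ and the cross terms $\Delta M\,\Delta X$ cannot be separated from the drift without compensating them. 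Second, strict positivity of $X$: $MX$ is only assumed to be a local martingale, not positive, so ``$M>0$ and $MX>0$'' is not available; and a nonnegative local martingale may well hit zero, so the claim that $X$ reaching $0$ would destroy the local-martingale property of $MX^\alpha$ itself needs proof. (In the paper's application $X_t=\exp\big(\theta_t^2\int_{T_1}^{T_2}h(s)^2ds\big)\geq1$, so this second gap is harmless there, but it is a gap for the proposition as stated.) The Jensen/convexity proof of \cite{HK2006} needs neither continuity nor strict positivity, which is presumably why the paper simply cites it.
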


The proof of Proposition \ref{specmartg} is given in  \cite{HK2006} and is based on a convexity argument.

At this point, we stress that no assumptions are made on the dynamics of $S_t$ or on the
relationship between $S_t$ and $\theta_t$.

\section{The case of finitely many strikes}

We now only assume that \eqref{maineq} holds true for a finite sequence of strike prices, $0=K_0<K_1<K_2\ldots<K_{m}$.
With this reduced assumption, it is no longer possible to conclude that $\theta_t^2$ is constant. We shall see that
we can however obtain universal bounds on the amount of variation $\theta_t$ is allowed to have. These bounds are
universal in the sense that they do not depend on $S_t$ or $\theta_t$, but only on $Z_t$ or more specifically on
the function $\phi$.

The practical implication of these bounds, as given in Theorem \ref{mainth} and more specifically
inequality \eqref{bound}, is a set of constraints limiting the range of possible values
of the parameters that define $S_t$ and $\theta_t$. Indeed, for a given
parametric model, the left-hand side of \eqref{bound} can in principle be
expressed as a function of those parameters, while the right-hand side only
depends on $\phi$. Inequality \eqref{bound} can therefore be used as a tool in
calibrating the model parameters.

Similar to the case of continuum of strikes we extract information about the implied volatility process $\theta_t$ through the process
$X_t = \exp\left(\theta_t^2\int_{T_1}^{T_2}h(s)^2ds\right)$. In the former case we have shown that $X_t=X_0$, implying that
$\theta_t=\theta_0$, but in the present case we only manage a bound on the variation of $\theta$. The way we shall measure such a variation
is via a function $Q$ (a polynomial with possibly non-integer powers) associated with a sequence of terminal times, $T_1<T_2<\ldots<T_q$
($q\geq 3$). Let
$$\alpha_k = \frac{\int_{T_1}^{T_k}h(s)^2ds}{\int_{T_1}^{T_2}h(s)^2ds},\quad k=1,\ldots,q.$$
Observe that $\alpha_1=0$, $\alpha_2=1$ and the sequence $\alpha_k$ is increasing.
Now, for any non-zero sequence $p_3,\ldots,p_q$ of non-negative numbers, let
\begin{eqnarray}
p_2 & = & -\sum_{k=3}^qp_k\alpha_kX_0^{\alpha_k-1},\ p_1 = -\sum_{k=2}^qp_kX_0^{\alpha_k}\mbox{ and }\nonumber\\
Q(x) & = & \sum_{k=1}^qp_kx^{\alpha_k} = \sum_{k=3}^qp_k\left(x^{\alpha_k}-\alpha_kX_0^{\alpha_k-1}x+(\alpha_k-1)X_0^{\alpha_k}\right).\label{poly}
\end{eqnarray}
Such a function is clearly convex on $(0,\infty)$ and, nil and minimum at $x=X_0$. In particular
$Q(x)\geq0$.

In the particular time-homogeneous ($h\equiv1$) unweighted ($p_3=\ldots=p_q=1$) case of equidistant times ($T_{k+1}-T_k=T_2-T_1$), this function
reduces to $Q(x) = (x-X_0)^2P(x)$, for some polynomial $P$ with positive coefficients.

Let, for a terminal time $T$ and $t\leq T$,
\begin{eqnarray*}
M_{t,T} & = & \mathbb{E}[\phi(S_T)|{\mathcal{F}}_t]\mbox{ and}\\
N_{t,T} & = & \mathbb{E}[\phi(Z_T)|Z_t=z](t,\theta_t,S_{t})\ =\ \exp\left(\theta_t^2\int_t^Th(s)^2ds\right)\phi(S_t).
\end{eqnarray*}

In the case of a continuum of strikes, we saw that the martingale $M_{t,T}$ coincides with $N_{t,T}$. This fact
plays a key role in establishing the non-existence result in continuum of strikes case. This non-existence result is no longer true
in the case of finitely many strikes. Therefore our initial aim is precisely to compute the difference
$$D_{t,T} = M_{t,T}-N_{t,T}.$$

\begin{proposition}
Let $S_t$ be a non-negative adapted process. Assume that there exist a finite sequence of strike prices,
$0=K_0<K_1<K_2\ldots<K_{m}$ such that, for all $t\leq T$,
\begin{eqnarray*}
\lefteqn{\mathbb{E}[(S_T-K_j)^+|{\mathcal{F}}_t]}\\
& = & C(T,t,K_j,\theta_t,S_t)\ =\ \mathbb{E}[(Z_T-K_j)^+|Z_t=z](t,\theta_t,S_{t})
\end{eqnarray*}
Then
\begin{eqnarray}
\lefteqn{D_{t,T}}\nonumber\\
& = & \sum_{j=0}^{m-1}\int_{K_j}^{K_{j+1}}\Big(\mathbb{E}[(S_T-K)^+|{\mathcal{F}}_t]
-\mathbb{E}[(S_T-K_j)^+|{\mathcal{F}}_t]\Big)\phi''(K)dK\nonumber\\
& & -\ \sum_{j=0}^{m-1}\int_{K_j}^{K_{j+1}}\Big(\mathbb{E}[(Z_T-K)^+|Z_t=z](t,\theta_t,S_t)\label{DtT}\\
& & -\ \mathbb{E}[(Z_T-K_j)^+|Z_t=z](t,\theta_t,S_t)\Big)\phi''(K)dK\nonumber\\
& & +\ \mathbb{E}[\hat\phi(K_m,S_T)|{\mathcal{F}}_t]-\mathbb{E}[\hat\phi(K_m,Z_T)|Z_t=z](t,\theta_t,S_t),\nonumber
\end{eqnarray}
where $\hat\phi(b,x) = \phi(x)-\phi(x\wedge b) = (\phi(x)-\phi(b))1_{x>b}$.
\end{proposition}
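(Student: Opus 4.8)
The plan is to express both $M_{t,T}$ and $N_{t,T}$ using the same integral representation of $\phi$ that was used in the proof of Theorem \ref{continuum}, but now being careful that we only know the option price formula at the finitely many strikes $K_0,\ldots,K_m$. Recall the identity
$$\phi(x) = \int_0^\infty(x-a)^+\phi''(a)\,da + \phi'(0)x + \phi(0),$$
valid since $\phi$ is $C^2$ and convex. The natural first step is to split the integral $\int_0^\infty$ as $\sum_{j=0}^{m-1}\int_{K_j}^{K_{j+1}} + \int_{K_m}^\infty$, so that
$$\phi(x) = \sum_{j=0}^{m-1}\int_{K_j}^{K_{j+1}}(x-K)^+\phi''(K)\,dK + \int_{K_m}^\infty(x-K)^+\phi''(K)\,dK + \phi'(0)x + \phi(0).$$
Apply $\mathbb{E}[\,\cdot\,|\mathcal{F}_t]$ with $x=S_T$ to get $M_{t,T}$, and apply $\mathbb{E}[\,\cdot\,|Z_t=z](t,\theta_t,S_t)$ with $x=Z_T$ to get $N_{t,T}$ (using that $S_t$ is a martingale, which follows as in the proof of Theorem \ref{continuum} from the $K_0=0$ case, so the $\phi'(0)x$ term contributes $\phi'(0)S_t$ in both expressions and cancels in $D_{t,T}$; likewise the $\phi(0)$ constants cancel). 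Subtracting, the $\phi'(0)x+\phi(0)$ contributions disappear and we are left with $D_{t,T}$ written as a sum over $j=0,\ldots,m-1$ of integrals of $\big(\mathbb{E}[(S_T-K)^+|\mathcal{F}_t] - \mathbb{E}[(Z_T-K)^+|Z_t=z](t,\theta_t,S_t)\big)\phi''(K)\,dK$ over $[K_j,K_{j+1}]$, plus the corresponding tail term $\mathbb{E}[\int_{K_m}^\infty(S_T-K)^+\phi''(K)\,dK\,|\,\mathcal{F}_t]$ minus its $Z$-counterpart.

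The second step is to rewrite each piece so that the option price formula can be invoked only at the grid points. On $[K_j,K_{j+1}]$, write $\mathbb{E}[(S_T-K)^+|\mathcal{F}_t] = \big(\mathbb{E}[(S_T-K)^+|\mathcal{F}_t] - \mathbb{E}[(S_T-K_j)^+|\mathcal{F}_t]\big) + \mathbb{E}[(S_T-K_j)^+|\mathcal{F}_t]$, and similarly for the $Z$ side. The hypothesis gives $\mathbb{E}[(S_T-K_j)^+|\mathcal{F}_t] = \mathbb{E}[(Z_T-K_j)^+|Z_t=z](t,\theta_t,S_t)$ for each $j$, so the ``anchored'' pieces $\mathbb{E}[(S_T-K_j)^+|\mathcal{F}_t]\phi''(K)$ and $\mathbb{E}[(Z_T-K_j)^+|Z_t=z](t,\theta_t,S_t)\phi''(K)$ cancel inside each integral $\int_{K_j}^{K_{j+1}}$. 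What survives from the bulk is exactly the first two double sums in \eqref{DtT}: the differences $\mathbb{E}[(S_T-K)^+|\mathcal{F}_t]-\mathbb{E}[(S_T-K_j)^+|\mathcal{F}_t]$ and the analogous $Z$-quantity, each weighted by $\phi''(K)$ and integrated over $[K_j,K_{j+1}]$.

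The third step handles the tail $\int_{K_m}^\infty$. Here I would recognize that $\int_{K_m}^\infty(x-K)^+\phi''(K)\,dK$, together with the appropriate linear-in-$x$ correction, reconstitutes $\hat\phi(K_m,x) = \phi(x) - \phi(x\wedge K_m)$. Concretely, for $x\le K_m$ the tail integral is $0$ and $\hat\phi(K_m,x)=0$; for $x>K_m$, integrating by parts twice gives $\int_{K_m}^\infty(x-K)^+\phi''(K)\,dK = \phi(x) - \phi(K_m) - \phi'(K_m)(x-K_m)$, and one checks this equals $\hat\phi(K_m,x)$ modulo a term affine in $x$ whose $\mathbb{E}[\,\cdot\,|\mathcal{F}_t]$ and $\mathbb{E}[\,\cdot\,|Z_t=z](t,\theta_t,S_t)$ values agree (again by the martingale property of $S_t$ and the anchor identity at $K=K_m$). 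Matching up the affine corrections on the $S$ and $Z$ sides so that they cancel in the difference, the tail contributes precisely $\mathbb{E}[\hat\phi(K_m,S_T)|\mathcal{F}_t] - \mathbb{E}[\hat\phi(K_m,Z_T)|Z_t=z](t,\theta_t,S_t)$, which is the last line of \eqref{DtT}. Assembling the three steps gives the claimed formula.

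I expect the main obstacle to be bookkeeping rather than anything deep: one must track the affine-in-$x$ terms ($\phi'(0)x$, $\phi(0)$, and the boundary terms $\phi(K_m)$, $\phi'(K_m)(x-K_m)$ arising from integration by parts on each subinterval and on the tail) and verify that every such term cancels in the difference $M_{t,T}-N_{t,T}$ — this uses the martingale identity $\mathbb{E}[S_T|\mathcal{F}_t]=S_t=\mathbb{E}[Z_T|Z_t=S_t](t,\theta_t,S_t)$ and the hypothesized price-matching at each $K_j$. A secondary technical point is justifying the interchange of $\mathbb{E}[\,\cdot\,|\mathcal{F}_t]$ with the (possibly improper) integral $\int_0^\infty$ and with the tail integral; this is controlled by Fubini/Tonelli using $\phi''\ge0$ and the integrability built into the representation of $\phi$ (finiteness of $\phi(S_T)$ and $\phi(Z_T)$, which is implicit since $M_{t,T}$ and $N_{t,T}$ are assumed well-defined), so no new hypotheses are needed.
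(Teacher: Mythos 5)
Your proposal is correct and follows essentially the same route as the paper: represent $\phi$ via $\phi(x)=\int_0^\infty(x-a)^+\phi''(a)\,da+\phi'(0)x+\phi(0)$, cancel the affine terms using the strike $K_0=0$, split the integral at the grid points and cancel the anchored pieces via the price-matching hypothesis at each $K_j$, and identify the tail with $\hat\phi(K_m,\cdot)$ up to the correction $\phi'(K_m)(x-K_m)^+$, which cancels by the hypothesis at $K_m$. Only a cosmetic quibble: that correction is a multiple of the call payoff $(x-K_m)^+$ rather than an affine function of $x$, but since you invoke the anchor identity at $K=K_m$ to dispose of it, the argument is sound and, if anything, more explicit about this step than the paper's own proof.
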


\begin{proof}
Since $\mathbb{E}[(S_T-0)^+|{\mathcal{F}}_t] = \mathbb{E}[(Z_T-0)^+|Z_t=z](t,\theta_t,S_{t})$,
\begin{eqnarray*}
\lefteqn{D_{t,T}}\\
& = & \int_0^\infty\mathbb{E}[(S_T-K)^+|{\mathcal{F}}_t]\phi''(K)dK + \phi'(0)\mathbb{E}[(S_T-0)^+|{\mathcal{F}}_t] + \phi(0)\\
& & -\ \int_0^\infty\mathbb{E}[(Z_T-K)^+|Z_t=z](t,\theta_t,S_{t})\phi''(K)dK\\
& & -\ \phi'(0)\mathbb{E}[(Z_T-0)^+|Z_t=z](t,\theta_t,S_{t}) - \phi(0)\\
& = & \int_0^\infty\mathbb{E}[(S_T-K)^+|{\mathcal{F}}_t]\phi''(K)dK\\
& & -\ \int_0^\infty\mathbb{E}[(Z_T-K)^+|Z_t=z](t,\theta_t,S_{t})\phi''(K)dK
\end{eqnarray*}
Therefore, using the identity
$$\mathbb{E}[(S_T-K_j)^+|{\mathcal{F}}_t] = \mathbb{E}[(Z_T-K_j)^+|Z_t=z](t,\theta_t,S_{t}),$$
we get
\begin{eqnarray*}
\lefteqn{D_{t,T}}\\
& = & \sum_{j=0}^{m-1}\int_{K_j}^{K_{j+1}}\mathbb{E}[(S_T-K)^+|{\mathcal{F}}_t] + \int_{K_m}^\infty\mathbb{E}[(S_T-K)^+|{\mathcal{F}}_t]\phi''(K)dK\\
& & -\ \sum_{j=0}^{m-1}\int_{K_j}^{K_{j+1}}\mathbb{E}[(Z_T-K)^+|Z_t=z](t,\theta_t,S_t)\phi''(K)dK\\
& & -\  \int_{K_m}^\infty\mathbb{E}[(Z_T-K)^+|Z_t=z](t,\theta_t,S_t)\phi''(K)dK
\end{eqnarray*}
that is,
\begin{eqnarray*}
\lefteqn{D_{t,T}}\\
& = & \sum_{j=0}^{m-1}\int_{K_j}^{K_{j+1}}\Big(\mathbb{E}[(S_T-K)^+|{\mathcal{F}}_t]
-\mathbb{E}[(S_T-K_j)^+|{\mathcal{F}}_t]\Big)\phi''(K)dK\\
& & -\ \sum_{j=0}^{m-1}\int_{K_j}^{K_{j+1}}\Big(\mathbb{E}[(Z_T-K)^+|Z_t=z](t,\theta_t,S_t)\\
& & -\ \mathbb{E}[(Z_T-K_j)^+|Z_t=z](t,\theta_t,S_t)\Big)\phi''(K)dK\\
& & +\ \mathbb{E}[\hat\phi(K_m,S_T)|{\mathcal{F}}_t]-\mathbb{E}[\hat\phi(K_m,Z_T)|Z_t=z](t,\theta_t,S_t),
\end{eqnarray*}
which completes the proof.
\end{proof}

We are now ready to state the main result of this section. We shall rely on two facts:
\begin{enumerate}
\item $N_{t,T} + D_{t,T}$ is a true martingale;
\item for a sequence of maturity times $T_i$, $N_{t,T_k}=N_{t,T_1}X_t^{\alpha_k}$, for some $\alpha_k$,
and where as before
\begin{equation}\label{Xt}
X_t = \exp\left(\theta_t^2\int_{T_1}^{T_2}h(s)^2ds\right).
\end{equation}
\end{enumerate}

\begin{theorem}\label{mainth}
Let $S_t$ and $\theta_t$ be adapted processes such that
$\theta_0 = \sigma$ and $S_0=z_0$. Assume that $S_t$ is
non-negative and that there exist a sequence of terminal times, $T_1<T_2<\ldots<T_q$ and a finite sequence of strike prices,
$0=K_0<K_1<K_2\ldots<K_{m}$ such that, for all $t\leq T_i$,
\begin{equation}\label{maineqfinite}
\mathbb{E}[(S_{T_i}-K_j)^+|{\mathcal{F}}_t] =
C(T_i,t,K_j,\theta_t,S_t)
\end{equation}
Then for any non-zero sequence $p_3,\ldots,p_q$ of non-negative numbers, or equivalently for any function $Q$ as defined in \eqref{poly},
and assuming that $\mathbb{E}[N_{t,T_q}]<+\infty$,
\begin{eqnarray}
\lefteqn{\left|\mathbb{E}[N_{t,T_1}Q(X_t)]+\sum_{k=1}^qp_k\mathbb{E}[G_{0,T_k}-G_{t,T_k}]\right|}\nonumber\\
& \leq & 2\sum_{k=1}^q|p_k|\sum_{j=0}^{m-1}(K_{j+1}-K_j)(\phi'(K_{j+1})-\phi'(K_j)),\label{bound}
\end{eqnarray}
where $X_t$ is given in \eqref{Xt} and
$$G_{t,T} = \mathbb{E}[\hat{\phi}(K_m, Z_T)|Z_t=z](t,\theta_t,S_t).$$
\end{theorem}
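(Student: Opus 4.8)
The plan is to combine the two ``facts'' listed just before the theorem with the convexity of $Q$ and the bound on $D_{t,T}$ implicit in the previous proposition. First I would record that, by the proposition preceding the theorem (applied at each maturity $T_k$), $N_{t,T_k}+D_{t,T_k}=M_{t,T_k}=\mathbb{E}[\phi(S_{T_k})|\mathcal{F}_t]$ is a true martingale, so in particular $\mathbb{E}[N_{t,T_k}+D_{t,T_k}]=\mathbb{E}[N_{0,T_k}+D_{0,T_k}]$ for each $k$. Next, using the identity $N_{t,T_k}=N_{t,T_1}X_t^{\alpha_k}$ with $\alpha_k$ as defined before the statement (which follows directly from the definitions of $N$, $X_t$ and $\alpha_k$ since $\int_t^{T_k}h^2=\int_t^{T_1}h^2+\alpha_k\int_{T_1}^{T_2}h^2$), I would form the linear combination $\sum_{k=1}^q p_k(N_{t,T_k}+D_{t,T_k})$ and observe that $\sum_k p_k N_{t,T_k}=N_{t,T_1}\sum_k p_k X_t^{\alpha_k}=N_{t,T_1}Q(X_t)$, precisely because the coefficients $p_1,p_2$ were chosen in \eqref{poly} to make this work.

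Second, I would take expectations of the martingale identity in the combined form. Writing $\sum_k p_k(N_{t,T_k}+D_{t,T_k})$ for the martingale and equating its expectation at time $t$ with its expectation at time $0$ gives
\begin{equation*}
\mathbb{E}\!\left[N_{t,T_1}Q(X_t)\right]+\sum_{k=1}^q p_k\,\mathbb{E}[D_{t,T_k}]
=\mathbb{E}\!\left[N_{0,T_1}Q(X_0)\right]+\sum_{k=1}^q p_k\,\mathbb{E}[D_{0,T_k}].
\end{equation*}
Since $Q(X_0)=0$ by construction, the first term on the right vanishes. Now I would split $D_{t,T}$ via \eqref{DtT} into the ``integral part'' (the two sums of integrals over $[K_j,K_{j+1}]$) and the ``tail part'' $\mathbb{E}[\hat\phi(K_m,S_T)|\mathcal{F}_t]-G_{t,T}$. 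The tail part contributes $\mathbb{E}[\hat\phi(K_m,S_T)]-\mathbb{E}[G_{t,T}]$ after taking expectations, and — crucially — $\mathbb{E}[\hat\phi(K_m,S_{T})]$ does not depend on $t$ (it equals $\mathbb{E}[\mathbb{E}[\hat\phi(K_m,S_T)|\mathcal{F}_t]]$), so when I subtract the $t=0$ version, the $\mathbb{E}[\hat\phi(K_m,S_T)]$ terms cancel and I am left with exactly $\sum_k p_k\,\mathbb{E}[G_{0,T_k}-G_{t,T_k}]$, matching the left side of \eqref{bound}. What remains is to bound the contribution of the integral parts, evaluated at $t$ and at $0$.

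Third, I would bound each integral part in absolute value. For the $S$-side, on each interval $[K_j,K_{j+1}]$ the integrand $\mathbb{E}[(S_T-K)^+|\mathcal{F}_t]-\mathbb{E}[(S_T-K_j)^+|\mathcal{F}_t]$ is nonnegative and bounded above by $K-K_j\le K_{j+1}-K_j$ (since $(x-K)^+-(x-K_j)^+$ lies between $-(K-K_j)$ and $0$, hence its negative is between $0$ and $K-K_j$; more carefully, $(x-K_j)^+-(x-K)^+\in[0,K-K_j]$). The same bound applies to the $Z$-side. Thus each integral over $[K_j,K_{j+1}]$ is at most $(K_{j+1}-K_j)\int_{K_j}^{K_{j+1}}\phi''(K)\,dK=(K_{j+1}-K_j)(\phi'(K_{j+1})-\phi'(K_j))$, using $\phi''\ge0$. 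Summing over $j$, over the two sides (factor $2$), and over $k$ with weights $|p_k|$, and doing the same at $t=0$, gives a total of at most $2\sum_{k}|p_k|\sum_j (K_{j+1}-K_j)(\phi'(K_{j+1})-\phi'(K_j))$ — wait, naively this would be doubled again by the $t$-versus-$0$ subtraction; I would instead note that at $t=0$ the processes $S$ and $Z$ agree in law started from $z_0$ with $\theta_0=\sigma$, so the integral parts at $t=0$ actually vanish identically (the $S$-integrals equal the $Z$-integrals term by term), leaving only the $t$-contribution and hence no extra factor. This is the step I expect to require the most care: verifying that the $t=0$ integral parts cancel exactly (rather than merely being bounded), and checking the integrability hypotheses $\mathbb{E}[N_{t,T_q}]<\infty$ are enough to justify interchanging expectation and the $K$-integral via Fubini/Tonelli and to ensure $M_{t,T_k}$ is genuinely a true (not just local) martingale so that the optional-stopping-type identity $\mathbb{E}[M_{t,T_k}]=\mathbb{E}[M_{0,T_k}]$ is legitimate. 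Assembling these pieces yields \eqref{bound}.
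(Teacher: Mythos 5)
Your overall architecture matches the paper's: the martingale identity for $\sum_k p_k M_{t,T_k}$, the algebraic identity $\sum_k p_k N_{t,T_k}=N_{t,T_1}Q(X_t)$ with $Q(X_0)=0$, the tower-property cancellation of the $S$-side tail term, and the elementary bound $0\le (z-K_j)^+-(z-K)^+\le K_{j+1}-K_j$ integrated against $\phi''\ge0$. The genuine gap is exactly the step you flagged: your claim that at $t=0$ the $S$-side and $Z$-side integral parts cancel term by term because ``$S$ and $Z$ agree in law'' started from $z_0$ with $\theta_0=\sigma$. Nothing in the hypotheses gives this. Condition \eqref{maineqfinite} at $t=0$ equates call prices only at the finitely many strikes $K_0,\ldots,K_m$, and finitely many call prices do not determine the law of $S_{T_k}$; for intermediate strikes $K\in(K_j,K_{j+1})$ the quantities $\mathbb{E}[(S_{T_k}-K)^+|\mathcal{F}_0]$ and $\mathbb{E}[(Z_{T_k}-K)^+|Z_0=z_0]$ need not coincide --- indeed the whole point of this section of the paper is that finitely many strikes leave the law of $S$ undetermined. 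Without that cancellation, your bookkeeping (a factor $2$ at time $t$ from the two sides, plus the same contribution at $t=0$) only yields a bound with constant $4\sum_k|p_k|$ rather than \eqref{bound}.

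The paper closes this differently: it cancels the $S$-side \emph{across times} rather than across sides. Writing $D_{t,T}=H_{t,T}-L_{t,T}-G_{t,T}$ as in \eqref{DtT}, with $H_{t,T}$ collecting all $S$-terms (integral part plus tail), one uses that $H_{t,T}$ is a true martingale (Fubini and the tower property; see \cite{HJK2005}), so $\mathbb{E}[H_{t,T_k}]=\mathbb{E}[H_{0,T_k}]$ and only the $Z$-side survives: $\mathbb{E}[N_{t,T_1}Q(X_t)]+\sum_kp_k\mathbb{E}[G_{0,T_k}-G_{t,T_k}]=\sum_kp_k\mathbb{E}[L_{t,T_k}-L_{0,T_k}]$, with $-B\le L_{\cdot,T}\le0$ where $B=\sum_j(K_{j+1}-K_j)(\phi'(K_{j+1})-\phi'(K_j))$, which gives the factor $2$. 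Alternatively, your route can be repaired without invoking the martingale property of $H$: both the $S$-side and the $Z$-side integral parts lie in $[-B,0]$ at every time, so each difference ($S$-side minus $Z$-side) at time $t$ and at time $0$ is at most $B$ in absolute value, again producing $2\sum_k|p_k|B$; but this sharper interval bookkeeping is not what you wrote, and the vanishing-at-$t=0$ claim as stated is false in general. (A minor slip: the $S$-side integrand is nonpositive, not nonnegative, though your parenthetical gives the correct absolute bound.)
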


\begin{proof}
Since $N_{t,T_k}=N_{t,T_1}X_t^{\alpha_k}$ and $N_{t,T} + D_{t,T} = M_{t,T}$,
$$N_{t,T_1}Q(X_t)+\sum_{k=1}^qp_kD_{t,T_k} = \sum_{k=1}^qp_k(N_{t,T_k}+D_{t,T_k}) = \sum_{k=1}^qp_kM_{t,T_k}.$$
Further, labeling the various terms in \eqref{DtT},
\begin{eqnarray*}
H_{t,T}	& = & \sum_{j=0}^{m-1}\int_{K_j}^{K_{j+1}}\left(\mathbb{E}[(S_T-K)^+|\mathcal{F}_t]-\mathbb{E}[(S_T-K_j)^+|\mathcal{F}_t]\right)  \phi''(K)dK \\
& &	+\ \mathbb{E}[\hat{\phi}(K_m, S_T)|\mathcal{F}_t]\\
L_{t,T} & = & \sum_{j=0}^{m-1}\int_{K_j}^{K_{j+1}}\Big(\mathbb{E}[(Z_T-K)^+|Z_t=z](t,\theta_t,S_t)\\
& &	-\ \mathbb{E}[(Z_T-K_j)^+|Z_t=z](t,\theta_t,S_t)\Big)\phi''(K)dK\\
G_{t,T} & = & \mathbb{E}[\hat{\phi}(K_m, Z_T)|Z_t=z](t,\theta_t,S_t),
\end{eqnarray*}
so that $D_{t,T} = H_{t,T} - L_{t,T} - G_{t,T}$, and noting that $Q(X_0)=0$ and that both $M_{t,T}$ and $H_{t,T}$ are martingales
(see \cite{HJK2005} for a proof of the latter), we get,
$$\mathbb{E}[N_{t,T_1}Q(X_t)]+\sum_{k=1}^qp_k\mathbb{E}[D_{t,T_k}] = \sum_{k=1}^qp_k\mathbb{E}[M_{0,T_k}]
= \sum_{k=1}^qp_k\mathbb{E}[D_{0,T_k}],$$
from which it follows that
$$\mathbb{E}[N_{t,T_1}Q(X_t)] + \sum_{k=1}^qp_k\mathbb{E}[G_{0,T_k}-G_{t,T_k}] = \sum_{k=1}^qp_k\mathbb{E}[L_{t,T_k}-L_{0,T_k}].$$
Now, noting that, for $K\in(K_j,K_{j+1})$,
$$-(K-K_j)\leq(z-K)^+-(z-K_j)^+\leq0,$$
we get that
$$-\sum_{j=0}^{m-1}(K_{j+1}-K_j)(\phi'(K_{j+1})-\phi'(K_j))\leq L_{t,T}\leq 0.$$
Recall that since $\phi$ is convex, $\phi'(K_{j+1})-\phi'(K_j)\geq0$.
The result follows immediately.
\end{proof}

We note that, since $\mathbb{E}[\hat{\phi}(K_m, Z_T)|Z_t=z]$ converges to 0 as $K_m\to\infty$,
by taking a large enough $K_m$, we can make the term $\sum_{k=1}^qp_k\mathbb{E}[G_{0,T_k}-G_{t,T_k}]$ arbitrarily small so that
the left-hand side of \eqref{bound} essentially reduces to $\mathbb{E}[N_{t,T_1}Q(X_t)]$, which
in the time-homogeneous ($h\equiv1$) unweighted ($p_3=\ldots=p_q=1$) case of three equidistant times ($T_3-T_2=T_2-T_1$),
further reduces to $\mathbb{E}[N_{t,T_1}(X_t-X_0)^2]$.

We stress here that we have complete freedom as to
the choice of the non-zero sequence of non-negative numbers $p_3,\ldots,p_q$,
thus providing us with an infinite set of constraints.

Next we show that if the number of strike prices
increases in the sense of \eqref{strikes},
then the bound in~\eqref{bound} vanishes.  In other words,
the implied volatility approaches a constant when the number of
strike prices increases.  The notations introduced
in Theorem \ref{mainth} will now carry an index $n$.

\begin{corollary}\label{corconv}
Let $T_1<T_2<\ldots<T_q$ be a sequence of terminal times.
Let, for each $n$, $S^{(n)}_t$ and $\theta^{(n)}_t$ be adapted processes such that
$\theta^{(n)}_0 = \sigma$ and $S^{(n)}_0=z_0$. Assume that $S^{(n)}_t$ is
non-negative and that there exist a finite sequence of strike prices,
$0=K^{(n)}_0<K^{(n)}_1<K^{(n)}_2\ldots<K^{(n)}_{m(n)}$ such that, $\lim_{n\to\infty}K^{(n)}_{m(n)}=+\infty$,
\begin{equation}\label{strikes}
\lim_{n\to\infty} K^{(n)}_{m(n)} \,\max_{0\leq j\leq m(n) - 1}\big(\phi'(K^{(n)}_{j+1})-\phi'(K^{(n)}_j)\big) = 0
\end{equation}
and \eqref{maineqfinite} holds true for all $n$, all $j\leq m(n)$ and all $t\leq T_i$,
then for all $t\leq T_1$,
$$\liminf_{n\to\infty}\theta^{(n)}_t = \sigma.$$
\end{corollary}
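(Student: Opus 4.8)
The strategy is to apply Theorem~\ref{mainth} along the sequence and extract a convergence statement for $X^{(n)}_t$, then translate back to $\theta^{(n)}_t$. For each $n$, the hypotheses of Theorem~\ref{mainth} are satisfied (with the strikes $K^{(n)}_j$), so inequality~\eqref{bound} holds. I would first fix a convenient choice of the free parameters $p_3,\ldots,p_q$ — the simplest is the unweighted choice $p_3=\cdots=p_q=1$, which produces a fixed convex function $Q$ vanishing only at $X_0$ — and note that with this choice the right-hand side of~\eqref{bound} is exactly $2(q-1+|p_1|+|p_2|)\sum_{j}(K^{(n)}_{j+1}-K^{(n)}_j)(\phi'(K^{(n)}_{j+1})-\phi'(K^{(n)}_j))$. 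Since $\sum_j (K^{(n)}_{j+1}-K^{(n)}_j) = K^{(n)}_{m(n)}$, this sum is bounded by $K^{(n)}_{m(n)}\max_j(\phi'(K^{(n)}_{j+1})-\phi'(K^{(n)}_j))$, which tends to $0$ by~\eqref{strikes}. Hence the right-hand side of~\eqref{bound} vanishes as $n\to\infty$.

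Next I would control the auxiliary term $\sum_{k=1}^q p_k\,\mathbb{E}[G^{(n)}_{0,T_k}-G^{(n)}_{t,T_k}]$. Here $G^{(n)}_{t,T}=\mathbb{E}[\hat\phi(K^{(n)}_{m(n)},Z_T)\mid Z_t=z](t,\theta^{(n)}_t,S^{(n)}_t)$, and $0\le \hat\phi(b,x)=(\phi(x)-\phi(b))1_{x>b}\le \phi(x)1_{x>b}$. Using that $Z$ (started from any value) is such that $\phi(Z_\cdot)$ is integrable under Assumption~(M), together with the remark in the text that $\mathbb{E}[\hat\phi(K_m,Z_T)\mid Z_t=z]\to 0$ as $K_m\to\infty$ uniformly enough (this is where I would want a dominated-convergence argument, using $\phi(Z_{T_k})\in L^1$ as the dominating object and the fact that $G^{(n)}_{0,T_k}$ is deterministic while $G^{(n)}_{t,T_k}$ is dominated in expectation by $\mathbb{E}[\phi(S^{(n)}_{T_k})]=\mathbb{E}[N^{(n)}_{0,T_k}]=\phi(z_0)e^{\sigma^2\int_0^{T_k}h^2}$, a constant independent of $n$), I get $\mathbb{E}[|G^{(n)}_{0,T_k}-G^{(n)}_{t,T_k}|]\to 0$. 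Combining, $\mathbb{E}[N^{(n)}_{t,T_1}Q(X^{(n)}_t)]\to 0$.

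Since $N^{(n)}_{t,T_1}=\exp(\theta^{(n)2}_t\int_t^{T_1}h^2)\phi(S^{(n)}_t)>0$ and $Q\ge 0$ with $Q(x)=0$ only at $x=X_0$, the convergence $\mathbb{E}[N^{(n)}_{t,T_1}Q(X^{(n)}_t)]\to 0$ forces $N^{(n)}_{t,T_1}Q(X^{(n)}_t)\to 0$ in $L^1$, hence along a subsequence almost surely; on the event $\{\phi(S^{(n)}_t)$ bounded below$\}$ — which I would argue has full measure in the limit, or work on $\{Q(X^{(n)}_t)>\varepsilon\}$ and show its probability vanishes using $N^{(n)}_{t,T_1}\ge e^{\theta^{(n)2}_t\int_t^{T_1}h^2}\phi(S^{(n)}_t)$ and a uniform-integrability/Markov bound — this gives $X^{(n)}_t\to X_0$ in probability. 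By definition $X^{(n)}_t=\exp(\theta^{(n)2}_t\int_{T_1}^{T_2}h^2)$ and $X_0=\exp(\sigma^2\int_{T_1}^{T_2}h^2)$ with $\int_{T_1}^{T_2}h^2>0$, so $\theta^{(n)2}_t\to\sigma^2$ in probability, and since $\theta^{(n)}_t\ge 0$ (it is a volatility), $\theta^{(n)}_t\to\sigma$ in probability. Finally, convergence in probability of $\theta^{(n)}_t$ to the constant $\sigma$ yields $\liminf_{n\to\infty}\theta^{(n)}_t=\sigma$ along a subsequence a.s., and a Fatou-type argument upgrades this to $\liminf_{n}\theta^{(n)}_t=\sigma$ a.s.\ (the $\liminf$ cannot exceed $\sigma$ since it equals $\sigma$ along a subsequence, and cannot fall below $\sigma$ because $\theta^{(n)}_t\to\sigma$ in probability rules out a subsequence staying bounded away below).

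\textbf{Main obstacle.} The delicate point is converting the $L^1$-smallness of $\mathbb{E}[N^{(n)}_{t,T_1}Q(X^{(n)}_t)]$ into control of $X^{(n)}_t$ alone: one must prevent the factor $\phi(S^{(n)}_t)$ from degenerating to $0$ (which would let $Q(X^{(n)}_t)$ stay large). I would handle this by noting $\phi>0$ is continuous and $S^{(n)}_t\ge 0$, so $\phi(S^{(n)}_t)$ is bounded below on any event where $S^{(n)}_t$ stays in a compact set, and use the martingale bound $\mathbb{E}[\phi(S^{(n)}_{T_1})]=\phi(z_0)e^{\sigma^2\int_0^{T_1}h^2}$ together with $\phi$'s convexity/growth to get tightness of $S^{(n)}_t$ — but this is the step that genuinely uses the structure of $\phi$ and would need the most care. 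An alternative, cleaner route is to observe that $N^{(n)}_{t,T_1}\ge\phi(S^{(n)}_t)\ge$ (something) only fails on a set of small probability, and bound $P(Q(X^{(n)}_t)>\varepsilon)$ directly; either way the crux is decoupling the two factors.
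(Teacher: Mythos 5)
Your first half coincides with the paper's own proof: apply Theorem \ref{mainth} for each $n$, note that the right-hand side of \eqref{bound} is at most a fixed constant times $K^{(n)}_{m(n)}\max_j\big(\phi'(K^{(n)}_{j+1})-\phi'(K^{(n)}_j)\big)$, which vanishes by \eqref{strikes}, remove the terms $\mathbb{E}\big[G^{(n)}_{0,T_k}-G^{(n)}_{t,T_k}\big]$ by dominated convergence using $0\le\hat\phi(K^{(n)}_{m(n)},\cdot)\le\phi$ and $K^{(n)}_{m(n)}\to\infty$, and conclude $\mathbb{E}\big[N^{(n)}_{t,T_1}Q(X^{(n)}_t)\big]\to0$. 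Up to this point you and the paper agree.

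The gap is in the passage from this to the conclusion about $\theta^{(n)}_t$. You aim to show $X^{(n)}_t\to X_0$, hence $\theta^{(n)}_t\to\sigma$, \emph{in probability}; that is strictly stronger than the stated $\liminf$ conclusion and is in fact the content of Corollary \ref{thconv}, which requires the extra tightness hypothesis on $\theta^{(n)}$ that is not available here. The decoupling step you yourself flag as the main obstacle cannot be closed from the given hypotheses: the identity $\mathbb{E}[\phi(S^{(n)}_{T_k})]=\phi(z_0)\exp\big(\sigma^2\int_0^{T_k}h(s)^2ds\big)$ gives no lower bound on $\phi(S^{(n)}_t)$, and $\phi$ typically vanishes at the boundary of the state space ($\phi(z)=z^2$ in the Black--Scholes case, $\phi$ decreasing to $0$ in the Bessel example), so smallness of $\mathbb{E}\big[N^{(n)}_{t,T_1}Q(X^{(n)}_t)\big]$ does not force $P\big(Q(X^{(n)}_t)>\varepsilon\big)\to0$. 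Moreover, even granting convergence in probability, your final inference is invalid: $\theta^{(n)}_t\to\sigma$ in probability does not yield $\liminf_n\theta^{(n)}_t=\sigma$ almost surely (independent rare deviations give convergence in probability while the pathwise $\liminf$ stays below $\sigma$ a.s.). The paper's argument is shorter and stays pathwise: since $N^{(n)}_{t,T_1}Q(X^{(n)}_t)\ge0$ and its expectations tend to $0$, Fatou's lemma gives $\liminf_n N^{(n)}_{t,T_1}Q(X^{(n)}_t)=0$ a.s., whence (using the positivity of $N^{(n)}_{t,T_1}$) $\liminf_n Q(X^{(n)}_t)=0$; as $Q$ is convex, nonnegative and vanishes only at $X_0$, this makes $\sigma$ a limiting value of $\theta^{(n)}_t$, which is the conclusion the paper draws, with the in-probability statement deliberately deferred to Corollary \ref{thconv} under tightness. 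Your proposal conflates the two results, and the step that would bridge them is exactly the one left unproved.
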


\begin{proof}
First we observe that for any $z$ and $n$ large enough, $\hat\phi(K^{(n)}_{m(n)},z)=0$ and that
$\hat\phi(K^{(n)}_{m(n)},Z_T)$ is dominated by the integrable random variable $\phi(Z_T)$. It follows,
by dominated convergence, that $\mathbb{E}[G^{(n)}_{t,T_k}]$ approaches 0. Since the right-hand side of \eqref{bound}
clearly approaches 0, we immediately get that
$$\lim_{n\to\infty}\mathbb{E}[N^{(n)}_{t,T_1}Q(X^{(n)}_{t})] =0.$$
Since $N^{(n)}_{t,T_1}Q(X^{(n)}_{t})\geq0$, we deduce by Fatou's Lemma that $\liminf_nQ(X^{(n)}_{t})=0$. We complete the proof
by recalling that $Q$ is convex and nil at $X_0$.

Here a superscript $(n)$ has been added whenever $(S_t,\theta_t)$ is replaced with $(S^{(n)}_t,\theta^{(n)}_t)$.
\end{proof}

\begin{corollary}\label{thconv}
Let $T_1<T_2<\ldots<T_q$ be a sequence of terminal times.
Let, for each $n$, $S^{(n)}_t$ and $\theta^{(n)}_t$ be adapted processes such that
$\theta^{(n)}_0 = \sigma$ and $S^{(n)}_0=z_0$. Assume that $S^{(n)}_t$ is
non-negative and that the sequence of processes $\theta^{(n)}_t$
is tight. Further assume that there exist a finite sequence of strike prices,
$0=K^{(n)}_0<K^{(n)}_1<K^{(n)}_2\ldots<K^{(n)}_{m(n)}$ such that, $\lim_{n\to\infty}K^{(n)}_{m(n)}=+\infty$,
\begin{equation}\label{strikes}
\lim_{n\to\infty} K^{(n)}_{m(n)} \,\max_{0\leq j\leq m(n) - 1}\big(\phi'(K^{(n)}_{j+1})-\phi'(K^{(n)}_j)\big) = 0
\end{equation}
and \eqref{maineqfinite} holds true for all $n$, all $j\leq m(n)$ and all $t\leq T_i$.
Then for all $t\leq T_1$,
$$\lim_{n\to\infty}\theta^{(n)}_t = \sigma.$$
\end{corollary}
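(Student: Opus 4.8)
The plan is to upgrade the $\liminf$ of Corollary~\ref{corconv} to a genuine limit, the new ingredient being tightness of $\{\theta^{(n)}_t\}_n$. I would first record that the proof of Corollary~\ref{corconv} in fact yields, for \emph{every} admissible $Q$ (equivalently, every non-zero non-negative sequence $p_3,\dots,p_q$), that $\mathbb{E}[N^{(n)}_{t,T_1}Q(X^{(n)}_t)]\to0$, where $X^{(n)}_t=\exp(\theta^{(n)2}_t\int_{T_1}^{T_2}h(s)^2ds)\geq1$. Since $N^{(n)}_{t,T_1}=(X^{(n)}_t)^{\gamma}\phi(S^{(n)}_t)$ with $\gamma=\int_t^{T_1}h(s)^2ds/\int_{T_1}^{T_2}h(s)^2ds\geq0$ (the case $t=0$ being trivial as $\theta^{(n)}_0=\sigma$, take $0<t\leq T_1$, so $\gamma>0$) and $(X^{(n)}_t)^{\gamma}\geq1$, this also gives $\mathbb{E}[\phi(S^{(n)}_t)Q(X^{(n)}_t)]\to0$. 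Two uniform facts are needed. First, exactly as in the proof of Theorem~\ref{continuum}, \eqref{maineqfinite} at $K_0=0$ makes each $S^{(n)}_t$ a non-negative martingale with $S^{(n)}_0=z_0$, so by Jensen's inequality and convexity of $\phi$, $\mathbb{E}[\phi(S^{(n)}_t)]\geq\phi(z_0)>0$ for all $n$. Second, $\sup_n\mathbb{E}[N^{(n)}_{t,T_1}]<\infty$, which I would obtain from $\mathbb{E}[N^{(n)}_{t,T}]=M^{(n)}_{0,T}-\mathbb{E}[D^{(n)}_{t,T}]$, the martingale property of $M^{(n)}$ and $H^{(n)}$, and the uniform control on $L^{(n)}$ and $G^{(n)}$ used in the proof of Theorem~\ref{mainth} (all the relevant strike sums tend to $0$ under \eqref{strikes}). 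Hence $\{\phi(S^{(n)}_t)\}_n$, and therefore $\{S^{(n)}_t\}_n$, is tight.

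Next I would argue by contradiction. As $\sigma$ is deterministic, it suffices to show $X^{(n)}_t\to X_0:=\exp(\sigma^2\int_{T_1}^{T_2}h(s)^2ds)$ in probability; with Corollary~\ref{corconv} this yields the conclusion. If it fails, there are a subsequence $(n_k)$ and $\delta,\eta>0$ with $P(|X^{(n_k)}_t-X_0|>\delta)\geq\eta$ for all $k$. By tightness, pass to a further subsequence along which $(\theta^{(n_j)}_t,S^{(n_j)}_t)$ converges in law, and by the Skorokhod representation theorem realise this as a.s.\ convergence of copies $(\tilde\theta_j,\tilde S_j)\to(\tilde\theta,\tilde S)$ (finite a.s.\ by tightness). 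With $\tilde X_j:=\exp(\tilde\theta_j^2\int_{T_1}^{T_2}h(s)^2ds)\to\tilde X$, the nonnegative variables $\phi(\tilde S_j)Q(\tilde X_j)$ converge a.s.\ to $\phi(\tilde S)Q(\tilde X)$, so Fatou's lemma gives $\mathbb{E}[\phi(\tilde S)Q(\tilde X)]\leq\liminf_j\mathbb{E}[\phi(S^{(n_j)}_t)Q(X^{(n_j)}_t)]=0$, whence $\phi(\tilde S)Q(\tilde X)=0$ a.s. Since every admissible $Q$ is strictly convex with unique zero at $X_0$, on $\{\phi(\tilde S)>0\}$ one must have $\tilde X=X_0$, i.e.\ $\tilde\theta=\sigma$. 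If, in addition, $\phi(\tilde S)>0$ almost surely, then $\tilde\theta=\sigma$ a.s., so $X^{(n_j)}_t\to X_0$ in probability, contradicting $P(|X^{(n_j)}_t-X_0|>\delta)\geq\eta$; this completes the proof.

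The step I expect to be the main obstacle is establishing that $\phi(\tilde S)>0$ a.s., equivalently that the weight $\phi(S^{(n)}_t)$ (or $N^{(n)}_{t,T_1}$) does not degenerate. A priori $\phi(S^{(n)}_t)$ could tend to $0$ on a set of positive probability — precisely where $S^{(n)}_t$ drifts to a boundary point of the state space at which $\phi$ vanishes — and on such a set the vanishing of the weight removes all constraint on $Q(X^{(n)}_t)$, so nothing pins down $\tilde X$. Ruling this out is where one has to combine the uniform $L^1$-bound on $N^{(n)}_{t,T_1}$ from the decomposition $M=N+D$ of Theorem~\ref{mainth}, the Jensen lower bound $\mathbb{E}[\phi(S^{(n)}_t)]\geq\phi(z_0)$, and the new tightness hypothesis on $\theta^{(n)}_t$ (which, via $\phi(S^{(n)}_t)\leq M^{-\gamma}N^{(n)}_{t,T_1}$ on $\{X^{(n)}_t>M\}$, suppresses the large-$X$ tail). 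Once this is secured, the compactness-and-convexity argument above closes the gap; without tightness only the $\liminf$ statement of Corollary~\ref{corconv} survives.
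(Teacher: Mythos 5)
Your argument stops short at exactly the point you flag, and that is a genuine gap: you never establish $\phi(\tilde S)>0$ a.s., and the ingredients you list cannot deliver it. The Jensen bound $\mathbb{E}[\phi(S^{(n)}_t)]\geq\phi(z_0)$ controls only the mean; to carry even that to the Skorokhod limit you would need uniform integrability of $\phi(S^{(n_j)}_t)$ (Fatou only gives $\mathbb{E}[\phi(\tilde S)]\leq\liminf_j\mathbb{E}[\phi(\tilde S_j)]$, i.e.\ the inequality in the unhelpful direction), and even granting $\mathbb{E}[\phi(\tilde S)]\geq\phi(z_0)$ you would only get $P(\phi(\tilde S)>0)>0$. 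Your contradiction, however, needs $\tilde\theta=\sigma$ a.s.: the event $\{\phi(\tilde S)=0\}$ may have positive probability and may be precisely where $|X^{(n_k)}_t-X_0|>\delta$ occurs, in which case no contradiction with $P(|X^{(n_k)}_t-X_0|>\delta)\geq\eta$ arises. The tail estimate $\phi(S^{(n)}_t)\leq M^{-\gamma}N^{(n)}_{t,T_1}$ on $\{X^{(n)}_t>M\}$ controls large values of $X^{(n)}_t$ where the weight is not small, but says nothing about the degenerate regime $\phi(S^{(n)}_t)\to0$, which is the actual obstruction. So, as written, the proof does not close.

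The paper's proof sidesteps all of this and is much shorter: the hypotheses of Corollary \ref{corconv} are inherited by every subsequence of $(S^{(n)}_t,\theta^{(n)}_t)$, so its conclusion $\liminf\theta^{(n_k)}_t=\sigma$ holds along every subsequence; by tightness and Prohorov's theorem every subsequence admits a further convergent subsequence, whose limit must then coincide with that $\liminf$, hence equal $\sigma$; since all subsequential limits agree, the whole sequence converges to $\sigma$. Treating Corollary \ref{corconv} as a black box means one never re-runs the Fatou argument, so the weight-degeneracy issue you identify does not resurface at this stage (to the extent it is a concern, it belongs to the proof of Corollary \ref{corconv} itself, where passing from $\liminf_n N^{(n)}_{t,T_1}Q(X^{(n)}_t)=0$ to $\liminf_n Q(X^{(n)}_t)=0$ already presumes the weight does not vanish). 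I suggest you replace the Skorokhod-representation contradiction argument by this subsequence principle; your preliminary observations (that $\mathbb{E}[N^{(n)}_{t,T_1}Q(X^{(n)}_t)]\to0$ for every admissible $Q$, and the strict convexity of $Q$ with unique zero at $X_0$) are correct but are not needed once Corollary \ref{corconv} is invoked along subsequences.
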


\begin{proof}
The result follows from Corollary \ref{corconv} above.
By Prohorov's Theorem, the sequence of processes $\theta^{(n)}_t$ is sequentially compact.
As a result, applying Corollary \ref{corconv} to any convergent subsequence yields
$$\lim_{k\to\infty}\theta^{(n_k)}_t = \liminf_{k\to\infty}\theta^{(n_k)}_t = \sigma.$$
As the limit is the same for all subsequences, the sequence itself must converge, to $\sigma$.
\end{proof}

In particular, if $\theta^{(n)}_t = \theta_t$ (does not depend on $n$), then Theorem \ref{thconv} states that $\theta_t$ is the constant $\sigma$.
Finally, the tightness assumption on the sequence of processes $\theta^{(n)}_t$ can be verified by checking Aldous' condition (see \cite{Aldous1978}).

\end{document}